\documentclass[10pt,conference]{IEEEtran}
\usepackage{amsmath,amsfonts,amsthm}
\usepackage{algorithmic}
\usepackage{algorithm}
\usepackage{array}
\usepackage{textcomp}
\usepackage{stfloats}
\usepackage{url}
\usepackage{color}
\usepackage{verbatim}
\usepackage{graphicx}
\usepackage{cite}
\usepackage{tikz}
\usepackage{tikzscale}
\usepackage{circuitikz}
\usepackage{subfigure}

\newtheorem{lem}{Lemma}

\DeclareMathOperator{\Si}{\mathrm{Si}}
\DeclareMathOperator{\Cin}{\mathrm{Cin}}

\usepackage{etoolbox}
\makeatletter
\patchcmd{\@makecaption}
  {\scshape}
  {}
  {}
  {}
\makeatother

\belowdisplayskip=\abovedisplayskip
\IEEEoverridecommandlockouts
\begin{document}

\title{On the Distribution of Matched Filtering with Continuous Aperture Arrays

\author{

\IEEEauthorblockN{Amy S. Inwood\IEEEauthorrefmark{1}, Abdulla Firag\IEEEauthorrefmark{3}, Peter J. Smith\IEEEauthorrefmark{2},~and Michail Matthaiou\IEEEauthorrefmark{1}}

\IEEEauthorblockA{\IEEEauthorrefmark{1}Centre for Wireless Innovation (CWI), Queen’s University Belfast, Belfast BT3 9DT, U.K.}

\IEEEauthorblockA{\IEEEauthorrefmark{2}School of Mathematics and Statistics, Victoria University of Wellington, Wellington, New Zealand}

\IEEEauthorblockA {Emails: \{a.inwood, m.matthaiou\}@qub.ac.uk, firag@yahoo.com, peter.smith@vuw.ac.nz}
    \vspace{-0.5cm}
}    

\thanks{This work was supported by the U.K. Engineering and Physical Sciences Research Council (EPSRC) grant (EP/X04047X/2) for TITAN Telecoms Hub. The work of P. J. Smith was supported by the Marsden Fund Council from New Zealand Government funding, managed by Royal Society Te Apārangi. The work of M. Matthaiou was supported by the European Research Council (ERC) under the European Union’s Horizon 2020 research and innovation programme (grant agreement No. 101001331).}
\thanks{\IEEEauthorrefmark{3}Abdulla Firag is a consultant with the Asian Development Bank. Prior to this, he was a director with Dhiraagu, a telecommunications company in the Maldives.}}

\maketitle

\begin{abstract}
Continuous aperture arrays (CAPAs) provide a theoretical upper bound on the performance of densely packed antenna arrays, but their analysis is limited by the lack of closed-form signal-to-noise ratio (SNR) distributions under realistic fading conditions. This paper derives accurate analytical expressions for the matched-filter SNR distribution of one-dimensional CAPAs in correlated Rayleigh environments under both the sinc and Jakes correlation models using the Karhunen–Loève expansion. By applying a truncated hypoexponential model, we obtain accurate approximations for the probability density function and cumulative distribution function of the SNR that closely match simulations, including the outage probability region where precise characterization is critical. Compared to a standard gamma approximation, our approach provides significantly improved accuracy in this regime. Additionally, the CAPA system considered is shown to outperform discrete antenna arrays. The derived expressions enable tractable and accurate evaluation of CAPAs under practical channel models.
\end{abstract}

\begin{IEEEkeywords}
Continuous aperture array, cumulative distribution function, matched filtering, probability density function.
\end{IEEEkeywords}
\vspace{-0.1em}
\section{Introduction}

The design of high-performance wireless communication systems increasingly depends on maximizing the spatial utilization of antenna apertures. Densely packing radiating elements within a fixed area enhances spatial degrees of freedom and enables improved beamforming and higher capacity. This has driven interest in antenna architectures that approach the theoretical limit of aperture densification.

Technologies in this space include continuous surface-based architectures, such as continuous reconfigurable intelligent surfaces (RISs) \cite{inwood_performance_2025}, holographic multiple-input-multiple-output (MIMO) systems \cite{demir_channel_2022} and stacked intelligent metasurfaces (SIMs) \cite{an_stacked_2024}. Another emerging technology is continuous aperture arrays (CAPAs) \cite{yuanwei_capa_2025, zhao_continuous_2025}, which operate as single antennas characterized by continuously distributed currents over their apertures. The concept of continuous aperture antennas traces back to foundational work in the 1960s \cite{wheeler_simple_1965}, where antennas were first idealized as smooth radiating surfaces. By modeling the antenna in this way, CAPAs provide valuable theoretical benchmarks and serve as upper bounds on the performance achievable by densely packed antenna arrays.

Accurate analytical characterization of CAPAs remains challenging because the distribution of the signal-to-noise ratio (SNR) is generally unknown. This undermines a rigorous performance analysis and makes it difficult to gain deep insights into system behavior. Thus, this work addresses this gap by developing accurate analytical expressions for the matched filtering output distribution for one-dimensional CAPAs in correlated Rayleigh fading environments. We approximate the continuous channel under both the sinc and Jakes correlation models using a Karhunen–Loève (KL) expansion \cite{loeve_probability_1978}, which enables the decomposition of the channel into uncorrelated components and thus facilitates a tractable analysis. The sinc model is considered directly, while the Jakes model is expressed via a ray-based channel (RBC) representation \cite{RBC} that converges exactly to 
Jakes correlation as the number of rays approaches infinity. In both cases, we truncate the infinite expansions to obtain computationally efficient and accurate approximations to the probability density function (PDF) and cumulative distribution function (CDF) of the SNR.

We investigate the accuracy of the derived SNR distributions, with particular attention to the lower tail critical for outage probability analysis. We also examine how the aperture length and carrier frequency impact system performance and how CAPAs perform against discrete antenna arrays, providing valuable insights for the design of future CAPAs and ultra-dense antenna systems.

\textit{Notation}: Statistical expectation is denoted $\mathbb{E}[\cdot]$; $\mathrm{Var}[\cdot]$ is the variance; $f_X(\cdot)$ and $F_X(\cdot)$ are the probability density function and cumulative distribution function, respectively, of a random variable $X$; $U[a,b]$ is a uniform distribution between minimum $a$ and maximum $b$; $\mathcal{CN}(\mu,\sigma^2)$ is a complex Gaussian distribution with mean $\mu$ and variance $\sigma^2$; $\mathrm{Exp}(\mu)$ is an exponential distribution with a mean of $\mu$; $\mathrm{Gamma}(\alpha,\delta)$ is a gamma distribution with shape parameter $\alpha$ and rate parameter $\delta$; $\Si(\cdot)$ and $\Cin(\cdot)$ are the sine and cosine integral functions, respectively; $\Gamma(\cdot)$ is the gamma function; $\gamma(\cdot,\cdot)$ is the lower incomplete gamma function; $(x*y)$ is the convolution of $x$ and $y$; $(\cdot)^*$ is the complex conjugate; $||\cdot||$ is the Euclidean norm and i.i.d.  denotes independent and identically distributed.

\section{System Model}\label{sec:sysmodel}
Consider the uplink one-dimensional (1D) CAPA system shown in Fig.~\ref{fig:sys_model}, where all points along the aperture of length $W$ can be activated. A single-user (SU) system is analyzed in this work to build up foundational theory on CAPAs under matched filtering, and the more complex scenarios of multi-user and zero-forcing systems are left for future work. The channel from a single-antenna source to position 
$x$ on the aperture is denoted by $h(x)$, with $0\leq x \leq W$.

\begin{figure}[h]
\centering
\resizebox{0.3\textwidth}{!}{\includegraphics{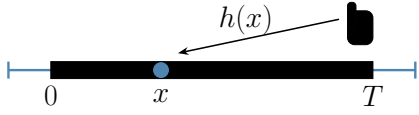}}
\vspace{-1.5em}
\caption{The UL system model for a 1D continuous aperture antenna.}
\vspace{0.5em}
\label{fig:sys_model}
\end{figure}

In this work, we consider correlated Rayleigh fading channels under both the sinc and Jakes correlation models. For the sinc correlation case, we directly model the channel as
\begin{equation}
    h(x) \sim \mathcal{CN}(0,\beta),
\end{equation}
where $\beta$ is the channel gain, while the covariance between channels at positions $x_1$ and $x_2$ is \cite{bjornson_rayleigh_2021}:
\begin{equation}
    \mathbb{E}\left[h(x_1)h^*(x_2)\right]=C(x_1,x_2) = \mathrm{sinc}\left(\frac{2\pi}{\lambda}(x_1-x_2)\right). \label{eq:sinc_corr}
\end{equation}

For analytical tractability, we employ a ray-based representation of correlated Rayleigh channels exhibiting Jakes correlation \cite{RBC}, where
\begin{equation}
    h(x)= \sum_{r=1}^R\sqrt{\beta_r}e^{j\Phi_r}\,\mathrm{exp}\left(\frac{2\pi j}{\lambda}  x \sin(\theta_r)\right),
\end{equation}
and $R$ is the number of received rays, $\lambda$ is the wavelength, $\theta_r$ is the azimuthal angle-of-arrival (AoA) of ray $r$, $\beta_r$ is the channel gain of ray $r$, and $\sum_{r=1}^R\beta_r=\beta$. The phase offsets of each ray, $\Phi_r$ are assumed to be i.i.d. $U[0,2 \pi]$. The covariance function, averaged over the random phase offsets, is \cite{RBC}
\begin{equation}
    C(x_1,x_2) = \sum_{r=1}^R\beta_r\,\mathrm{exp}\left(\frac{2\pi j}{\lambda} \sin(\theta_r)(x_1-x_2)\right). \label{eq:raycorr}
\end{equation}
As $R \rightarrow \infty $ the discrete covariance converges to the Jakes isotropic scattering model, $C(x_1,x_2) \rightarrow J_0(2\pi(x_1-x_2)/\lambda)$, when rays are uniformly distributed over $[0, 2\pi]$ and have equal power ($\beta_r=\beta/R$). For both correlation models, we assume that $\beta$ is constant across all antenna positions. 

The received signal at $x$ for both correlation models is
\begin{equation}
    r(x) = h(x)s + n,
\end{equation}
where $s$ is the transmitted signal, with $\mathbb{E}[|s|^2]=E_s$ and $n\sim\mathcal{CN}(0,\sigma^2)$ is the additive white Gaussian noise. Assuming matched filtering of the received signal at every point on the CAPA, the resulting SNR is 
\begin{equation}
    \mathrm{SNR}=\frac{E_s}{\sigma^2}\int_0^W|h(x)|^2dx=\frac{E_s}{\sigma^2}\gamma_1.
\end{equation}
Accurately predicting system performance metrics, such as the outage probability and the bit error rate, requires knowledge of the distribution of $\gamma_1$. This distribution is not available in closed form.  However, it can be analytically characterized using the KL expansion, derived in the following section.

\section{Analysis}
In this section, the KL expansion is employed to represent the channel in a form that facilitates the derivation of analytical expressions for the PDF and CDF of the SNR.

\subsection{KL Expansion of $\gamma_1$}
\label{sec:KLexp}
The KL expansion decomposes a Gaussian process into a sum of orthogonal eigenfunctions of its covariance operator, which must be Hermitian, multiplied by the corresponding eigenvalues and independent Gaussian random variables. This exact representation fully characterizes the distribution of the process. Therefore, $h(x)$ can be exactly expressed as \cite{loeve_probability_1978}\footnote{The KL expansion  used in this work can be generalized to 2D CAPAs. However, the covariance operator would be defined over a two-dimensional domain and the associated eigenvalues would involve quadruple integrals, making closed-form expressions more challenging to obtain. Hence, we focused on the 1D scenario in this initial work.}
\begin{equation}
    h(x)=\sum_{n=1}^\infty\sqrt{\lambda_n}u_n(x)z_n, \label{eq:KL_form}
\end{equation}
where $\lambda_n$ is the $n$-th eigenvalue, $u_n$ is the $n$-th eigenfunction of $C(x_1,x_2)$, while $z_n\sim\mathcal{CN}(0,1)$ are i.i.d. complex Gaussian variables. Substituting \eqref{eq:KL_form} into $\gamma_1$ gives
\begin{align}
    \gamma_1 &= \int_0^W\left|\sum_{n=1}^\infty\sqrt{\lambda_n}u_n(x)z_n\right|^2dx,\notag \\
    &= \sum_{n=1}^\infty\sum_{m=1}^\infty\sqrt{\lambda_n\lambda_m}z_n^*z_m\int_0^W\!\!\!u^*_n(x)u_m(x)dx.
\end{align}
As the covariance operator is Hermitian, the eigenfunctions $u_n(x)$ and $u_m(x)$ can be selected to be orthonormal such that
\begin{equation}
    \int_0^W\!\!\!u^*_n(x)u_m(x)dx= \begin{cases} 
      1 & n=m, \\   
      0 & n\neq m, 
   \end{cases}
\end{equation}
which leads to
\begin{equation}
    \gamma_1=\sum_{n=1}^\infty\lambda_n|z_n|^2, \label{eq:KLgamma}
\end{equation}
where $|z_n|^2\sim\mathrm{Exp(1)}$. The independence of the random variables in \eqref{eq:KLgamma} yields the following concise expressions for the mean and variance of $\gamma_1$:
\begin{equation}
    \mathbb{E}[\gamma_1]=\sum_{n=1}^\infty\lambda_n\,\mathbb{E}[|z_n|^2]=\sum_{n=1}^\infty\lambda_n = \frac{E_s \beta W}{\sigma^2}, \label{eq:Egamma1}
\end{equation}
\begin{equation}
    \mathrm{Var}[\gamma_1]= \sum_{n=1}^\infty\lambda_n^2\,\mathrm{Var}\!\left[|z_n|^2\right] = \sum_{n=1}^\infty\lambda_n^2. \label{eq:Vargamma1}
\end{equation}

\subsection{Eigenvalues of the Covariance Function} 
\label{sec:eigs}
To derive the eigenvalues of $C(x_1,x_2)$ for both the sinc and Jakes correlation models, the following Fredholm integral equation of the second kind must be solved \cite{loeve_probability_1978}: 
\begin{equation}
    \int_0^W\!\!C(x_1,x_2)u_n(x_2)dx_2=\lambda_n u_n(x_1), \,\,\, x_1\in[0,W].
\end{equation}
Due to the orthonormality of the eigenfunctions, multiplying both sides by $u_n^*(x_1)$ and integrating over $x_1$ gives
\begin{equation}
    \lambda_n = \int_0^W\int_0^W\!\!C(x_1,x_2)\,u_n^*(x_1)\,u_n(x_2)\,dx_1dx_2. \label{eq:eigvals}
\end{equation}
In both cases, we use cosine basis functions to approximate the eigenfunctions of the covariance operator, such that 
\begin{equation}
    u_n(x) = \sqrt{\frac{\epsilon_n}{W}}\cos\left(\frac{n\pi x}{W}\right), \label{eq:cos_basis}
\end{equation}
where $\epsilon_0=1$ and $\epsilon_n=2$ when $n>0$. Over the entire real line, both the sinc and Jakes correlation functions are translation invariant, making their exact eigenfunctions complex exponentials \cite{rasmussen_gaussian_2006}. As these kernels are real and even, the eigenbases can equivalently be taken as cosine functions over the full domain.  However, translation invariance is lost on finite intervals, and the eigenfunctions differ. Bandlimited kernels, such as sinc, have eigenfunctions that lack closed-form expressions \cite{papoulis_probability_2002}, while non-bandlimited kernels, like Jakes, require numerical computation or approximations. In practice, cosines form a complete orthonormal basis on finite intervals and provide an accurate approximation, making them a convenient and effective choice.

%For the ray-based scenario, if $R\rightarrow\infty$ and the rays approach a uniform distribution over $[0, 2\pi]$ with equal power,  the discrete ray-based covariance converges to the classical Jakes isotropic scattering model, $C(x_1,x_2) \rightarrow J_0(2\pi(x_1-x_2)/\lambda)$, which has cosine functions as its exact eigenbasis \cite{rasmussen_gaussian_2006}. This makes the KL expansion exact in this case. 

%In the correlated Rayleigh scenario the cosine basis is not the exact eigenbasis for the sinc covariance kernel. The true eigenfunctions are prolate spheroidal wave functions (PSWFs) \cite{papoulis_probability_2002}, which account for both finite spatial support and limited spatial bandwidth. However, since PSWFs lack closed-form expressions and are computationally involved, we approximate them using cosines. Cosines form a complete orthonormal basis on the interval and closely match the dominant modes, offering a practical and effective alternative.

\subsubsection{Sinc Correlation}
Using \eqref{eq:sinc_corr} for $C(x_1,x_2)$ and \eqref{eq:cos_basis} for $u_n(x_1)$ and $u_n(x_2)$, \eqref{eq:eigvals} becomes
\begin{equation}
    \lambda_n\!\approx\!\frac{\epsilon_n}{W}\!\!\!\int_0^W\!\!\!\int_0^W\!\!\!\!\mathrm{sinc}\!\left(\!c(x_1{-}x_2)\!\right)\!\cos\!\left(a_nx_1\!\right)\!\cos\!\left(a_nx_2\!\right)\!dx_1dx_2,\!\! \label{eq:lambdan_rayleigh}
\end{equation}
with $a_n=\frac{n\pi}{W}$ and $c=\frac{2\pi}{\lambda}$. The solution to \eqref{eq:lambdan_rayleigh} is derived in Lemma \ref{lem:rayleigh}.

\begin{lem}
\label{lem:rayleigh}
    The $n$-th eigenvalue of $C(x_1,x_2)$ in \eqref{eq:sinc_corr} can be accurately approximated by \eqref{eq:lambda_sinc}, where $A_2=2a_nW$, $C_2=2cW$, $C_p{=}W(c{+}a_n)$ and $C_m{=}W(c{-}a_n)$.
    
    \begin{figure*}[t]
    \normalsize
    \begin{equation}
         \label{eq:lambda_sinc}
         \lambda_n\!\approx\!\! \begin{cases}\!          \frac{2W}{c}\Si(cW) + \frac{2}{c^2}(\cos(cW)-1)   & a_n = 0, \\
          \!\frac{2W}{C_2}\Si(C_2)\left(1+\frac{\sin(C_2)}{2c}\right) + \frac{2W}{C_2^2}\left(\cos(C_2)(\Cin(C_2)+1)+\Cin(C_2)-1\right)   & a_n=c, \\
          \!\frac{2}{C_2}\!\!\left[\!\frac{W}{A_2}\!(\!(\sin(A_2)\!+\!A_2)\!(\Si(C_p)\!+\!\Si(C_m)\!)\!+\!(1\!+\!\cos(A_2)\!)\!(\Cin(C_p)\!-\!\Cin(C_m)\!)\!) \!+\!\frac{\cos(C_p)-1}{c+a_n}\!+\!\frac{\cos(C_m)-1}{c-a_n}\!\right]\!\!\! & \mathrm{otherwise}. 
         \end{cases}\!
    \end{equation}
    \hrulefill
    \vspace{-1.25em}
    \end{figure*}
\end{lem}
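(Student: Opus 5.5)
The plan is to reduce the double integral in \eqref{eq:lambdan_rayleigh} to a single integral over the difference variable, and then evaluate that integral with the sine and cosine integral functions. Since "approximately equal" in the statement refers only to the cosine-basis approximation of the eigenfunctions, the evaluation of \eqref{eq:lambdan_rayleigh} itself should be exact.

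First, write the product of cosines as
\[
\cos(a_nx_1)\cos(a_nx_2)=\tfrac12\bigl[\cos(a_n(x_1-x_2))+\cos(a_n(x_1+x_2))\bigr].
\]
Then change variables to $t=x_1-x_2$ and $s=x_1+x_2$. Integrating out the "sum" direction over the square $[0,W]^2$ gives
\[
\int_0^W\!\!\int_0^W g(x_1-x_2)\,dx_1dx_2=2\int_0^W (W-t)g(t)\,dt
\]
for the even term, $g(t)=\mathrm{sinc}(ct)\cos(a_nt)$. For the second term, integrating $\cos(a_n s)$ over $s\in[|t|,2W-|t|]$ gives
\[
\frac{\sin(a_n(2W-t))-\sin(a_nt)}{a_n} .
\]
Because $a_nW=n\pi$, this simplifies to $-2\sin(a_nt)/a_n$ for $n\ge1$.

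Both contributions therefore become one-dimensional integrals on $[0,W]$ of the forms
\[
\int_0^W \frac{\sin(ct)\cos(a_nt)}{ct}\,dt,\qquad \int_0^W \frac{\sin(ct)\cos(a_nt)}{c}\,dt,\qquad \int_0^W\frac{\sin(ct)\sin(a_nt)}{ct}\,dt .
\]
(Here I am taking $\mathrm{sinc}(x)=\sin x/x$, consistent with the $\Si$ terms in the result.)

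Next, apply product-to-sum identities. This gives $\sin(ct)\cos(a_nt)=\tfrac12[\sin((c+a_n)t)+\sin((c-a_n)t)]$ and $\sin(ct)\sin(a_nt)=\tfrac12[\cos((c-a_n)t)-\cos((c+a_n)t)]$. The integrals then follow from standard identities:
\begin{itemize}
\item $\int_0^W \sin(kt)/t\,dt=\Si(kW)$, which produces $\Si(C_p)$ and $\Si(C_m)$;
\item $\int_0^W (\cos(k_1t)-\cos(k_2t))/t\,dt=\Cin(k_2W)-\Cin(k_1W)$, which is finite even though each piece alone diverges, and produces $\Cin(C_p)-\Cin(C_m)$;
\item the elementary integrals of $\sin(kt)$, which produce the terms $(\cos(C_p)-1)/(c+a_n)$ and $(\cos(C_m)-1)/(c-a_n)$.
\end{itemize}
Finally, I would collect terms and use $\sin(A_2)=0$ and $\cos(A_2)=1$ where convenient, although the stated form keeps them general. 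The prefactor $\epsilon_n/W$ should then produce the bracketed expression with the factor $2/C_2$.

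The special cases are handled separately. For $a_n=0$ ($\epsilon_0=1$), only $2\int_0^W (W-t)\,\mathrm{sinc}(ct)\,dt$ remains. This equals $\frac{2W}{c}\Si(cW)-\frac{2}{c^2}(1-\cos(cW))$, which is the first case. For $a_n=c$, the frequency $c-a_n$ vanishes, so $\Si(C_m)$ and the term $(\cos(C_m)-1)/(c-a_n)$ degenerate. I would redo those pieces directly: $\sin(0\cdot t)=0$, and $\cos(0\cdot t)=1$ contributes a polynomial term. Equivalently, I could take the limit $a_n\to c$ in the general formula. Either route should yield only $\Si(C_2)$ and $\Cin(C_2)$ terms, as in the second case.

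I expect the main obstacle to be the $\Cin$ bookkeeping. The $\sin(a_nt)/a_n$ term produced by the sum-variable integration, combined with $1/t$ from the sinc, yields integrals of cosine differences over $t$ that must be paired correctly to stay finite. The $(W-t)$ weight also has to be split consistently between the $1/t$ part, giving the $\Si$ and $\Cin$ terms, and the constant part, giving the elementary cosine terms. Matching the exact grouping of the stated result, including the $(\sin A_2+A_2)$ and $(1+\cos A_2)$ coefficients, will need care. I would check the final expression numerically against direct quadrature of \eqref{eq:lambdan_rayleigh} for a few values of $n$, including $a_n=c$.
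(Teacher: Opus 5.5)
Your general-case computation is correct, and it takes a cleaner route than the paper. The paper substitutes $t=x_1-x_2$ but keeps $x_2$ as the second variable, splits the square into $t<0$ and $t>0$, and integrates $\cos(a_n(t+x_2))\cos(a_nx_2)$ over $x_2$. That produces four integrals containing the auxiliary frequencies $c\pm3a_n$ (the $C_{3p},C_{3m}$ terms), which cancel only after summation. Your product-to-sum step followed by the $(t,s)$ rotation makes the $s$-integration trivial and never generates those terms.

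It also settles your worry about matching the grouping. Do not invoke $a_nW=n\pi$; instead write $\sin(a_n(2W-t))=\sin A_2\cos(a_nt)-\cos A_2\sin(a_nt)$. The $s$-part then contributes $\frac{1}{2a_n}\int_0^W\mathrm{sinc}(ct)\bigl[\sin A_2\cos(a_nt)-(1+\cos A_2)\sin(a_nt)\bigr]dt$. Combined with $W=A_2/(2a_n)$ in the $(W-t)$ part, this gives exactly the stated bracket.

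One caution on notation. The paper defines $\Cin(x)=\int_0^x(\cos t-1)/t\,dt$, which is the negative of the convention in your second bullet. Under the paper's definition, $\int_0^W(\cos k_1t-\cos k_2t)/t\,dt=\Cin(k_1W)-\Cin(k_2W)$. With your convention you would see an apparent sign error in front of $\Cin(C_p)-\Cin(C_m)$.

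The genuine problem is the $a_n=0$ case. The quantity $2\int_0^W(W-t)\,\mathrm{sinc}(ct)\,dt$ is only the double integral, not $\lambda_0$. By \eqref{eq:lambdan_rayleigh} you still have to multiply by $\epsilon_0/W=1/W$, which gives
\[
\lambda_0\approx\frac{2}{c}\Si(cW)+\frac{2}{c^2W}\bigl(\cos(cW)-1\bigr),
\]
that is, the stated first case divided by $W$. This is a real discrepancy, not a matter of presentation:
\begin{itemize}
\item As $c\to0$ the kernel tends to the constant $1$, whose only nonzero eigenvalue on $[0,W]$ is $W$. The corrected expression tends to $W$, while the stated one tends to $W^2$.
\item Letting $a_n\to0$ in the stated general formula (which keeps $A_2$ general), and halving because $\epsilon_0=1$, reproduces the corrected expression, not the stated one.
\end{itemize}
So the first case as written holds only for $W=1$. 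The paper's appendix says only that the special cases follow ``analogously'', so it gives you nothing to check against. Your argument reaches the stated formula only by dropping the normalization you applied in the general case. You should carry the $1/W$ and flag the mismatch rather than claim agreement.

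The $a_n=c$ case is fine. Let $c\to a_n$ with $n$ fixed in the general formula; this removes the $C_m$ terms. At that point $C_2=2n\pi$, so the $\sin(C_2)$ and $\cos(C_2)-1$ pieces vanish, leaving the stated form in $\Si(C_2)$ and $\Cin(C_2)$.
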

\begin{proof}
    See Appendix A.
\end{proof}
\subsubsection{Jakes Correlation}
Using \eqref{eq:raycorr} for $C(x_1,x_2)$ and \eqref{eq:cos_basis} for $u_n(x_1)$ and $u_n(x_2)$, \eqref{eq:eigvals} becomes
\begin{equation}
    \lambda_n\!\!\approx\!\!\frac{\epsilon_n\beta}{WR}\!\sum_{r=1}^R\!\int_0^W\!\!\!\!\!\mathrm{e}^{j\alpha_rx_1}\!\cos\!\left(a_nx_1\!\right)\!\!\!\int_0^W\!\!\!\!\!\mathrm{e}^{-j\alpha_rx_2}\!\cos\!\left(a_nx_2\!\right)\!dx_1dx_2, \label{eq:lambdan_ray}
\end{equation}
where $\alpha_r=c\sin(\theta_r)$. The solution to \eqref{eq:lambdan_ray} is derived in Lemma \ref{lem:ray}.

\begin{lem}
    \label{lem:ray}
    When $R \to \infty$ and \eqref{eq:raycorr} converges to the Jakes model, the $n$-th eigenvalue of $C(x_1, x_2)$ in \eqref{eq:raycorr} can be accurately approximated by
    \begin{equation}
        \lambda_n\!\approx\!\frac{\beta}{2WR}\!\sum_{r=1}^R\!\begin{cases}
            2W^2 & a_n{=}0, |\alpha_r|{=}a_n, \\
            2\left|\frac{\mathrm{exp}(j\alpha_rW)-1}{j\alpha_r}\right|^2 & a_n{=}0, |\alpha_r|{\neq} a_n, \\
            W^2 & a_n{\neq} 0,|\alpha_r|{=} a_n, \\
            \begin{aligned}&\bigg|\tfrac{\mathrm{exp}(j(\alpha_r+a_n)W)-1}{j(\alpha_r+a_n)} \\ & \quad+\tfrac{\mathrm{exp}(j(\alpha_r-a_n)W)-1}{j(\alpha_r-a_n)}\bigg|^2\end{aligned}\!\!\!& a_n {\neq} 0, |\alpha_r| {\neq} a_n.
        \end{cases} \label{eq:lambda_ray}
    \end{equation}
\end{lem}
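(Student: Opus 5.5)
The plan is to start from \eqref{eq:lambdan_ray} and observe that the double integral factorizes into a product of two single integrals. Define
\begin{equation*}
I_r(a_n)=\int_0^W e^{j\alpha_r x}\cos(a_n x)\,dx .
\end{equation*}
Since $\alpha_r$ and $a_n$ are real, the $x_2$-integral is exactly $I_r(a_n)^*$. Each summand in \eqref{eq:lambdan_ray} is therefore $|I_r(a_n)|^2$, giving
\begin{equation*}
\lambda_n\approx\frac{\epsilon_n\beta}{WR}\sum_{r=1}^R |I_r(a_n)|^2 ,
\end{equation*}
which is nonnegative, as it must be. The equal-power assumption $\beta_r=\beta/R$, which the $R\to\infty$ Jakes limit requires, has already been used to pull $\beta/R$ out of the sum. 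The remaining task is to evaluate $I_r(a_n)$ in closed form and handle the degenerate cases.

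Next I will write $\cos(a_n x)=\tfrac12\left(e^{ja_n x}+e^{-ja_n x}\right)$, which gives
\begin{equation*}
I_r(a_n)=\tfrac12\int_0^W e^{j(\alpha_r+a_n)x}\,dx+\tfrac12\int_0^W e^{j(\alpha_r-a_n)x}\,dx .
\end{equation*}
Each piece equals $\frac{e^{j\omega W}-1}{j\omega}$ when $\omega\neq0$ and equals $W$ when $\omega=0$. I will then split into the four cases of \eqref{eq:lambda_ray}.

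If $a_n\neq0$ and $|\alpha_r|\neq a_n$, both frequencies $\alpha_r\pm a_n$ are nonzero. Then $|I_r|^2=\tfrac14\left|\cdot+\cdot\right|^2$, and with $\epsilon_n=2$ the prefactor becomes $\frac{2\beta}{WR}\cdot\frac14=\frac{\beta}{2WR}$, which matches the fourth line.

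If $a_n=0$, then $\epsilon_0=1$ and $I_r=\frac{e^{j\alpha_r W}-1}{j\alpha_r}$ for $\alpha_r\neq0$, or $I_r=W$ for $\alpha_r=0$. Writing $\frac{\beta}{WR}=\frac{\beta}{2WR}\cdot 2$ reproduces the factor $2$ in the first two lines. Note that the first case, $a_n=0=|\alpha_r|$, is $2W^2$.

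If $a_n\neq0$ and $|\alpha_r|=a_n$, one of the two frequencies vanishes and contributes $W/2$ to $I_r$. The other has frequency $\pm 2a_n=\pm 2n\pi/W$, so $e^{\pm j2a_nW}=e^{\pm j2n\pi}=1$ and that term vanishes exactly. Hence $I_r=W/2$, and $\lambda_n$ picks up $\frac{2\beta}{WR}\cdot\frac{W^2}{4}=\frac{\beta}{2WR}W^2$, which is the third line.

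The main point needing care is this degenerate case $|\alpha_r|=a_n$. Two things must be checked: that the nonvanishing-frequency term really disappears because $a_nW$ is an integer multiple of $\pi$, and that the sign of $\alpha_r$ does not matter. The $a_n=0$ case needs a little care as well: there $\alpha_r\pm a_n$ coincide, so the two halves of $I_r$ merge into a single term rather than adding as two separate terms.

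The word ``approximately'' in the statement comes only from using the cosine basis \eqref{eq:cos_basis} as surrogate eigenfunctions in \eqref{eq:eigvals}. Every step after that substitution is exact.
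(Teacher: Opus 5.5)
Your proof is correct and follows the same route as the paper's Appendix B. You factor the double integral into $|I_r|^2$, expand the cosine with Euler's formula, and evaluate the four cases; your $I_r$ is half of the paper's, which you absorb into the prefactor $\epsilon_n\beta/(WR)$ in place of $\epsilon_n\beta/(4WR)$, and your handling of $|\alpha_r|=a_n$ via $e^{\pm j2n\pi}=1$ is the same observation as the paper's remark that $\alpha_r$ is then an integer multiple of $\pi/W$.
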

\begin{proof}
    See Appendix B.
\end{proof}

\subsection{PDF and CDF of $\gamma_1$}
\label{sec:PDFCDF}
From \eqref{eq:KLgamma}, $\gamma_1$ can be expressed as a generalized chi-squared random variable. However, closed-form probability distributions for this class of variables are generally unavailable. Instead, by truncating the infinite sum to the \(N\) dominant terms, $\gamma_1$ can be approximated by a hypoexponential random variable, which has a known distribution. In this work, we consider this case, and an enhanced model where a gamma correction term is added to account for the discarded smaller eigenvalues. The latter approach is particularly useful when many eigenvalues are similar in magnitude, which can otherwise lead to instability in the hypoexponential approximation. 

For strictly bandlimited kernels, such as sinc correlation, Landau's theorem provides a useful estimate of the number of dominant modes \cite{chong_electromagnetic_2025}: as the aperture length $W$ increases, roughly $\frac{2W}{\lambda}$ eigenvalues approach unity while the remainder are near zero. Even for finite $W$, a sharp decay occurs in the eigenvalues after the first 
$\frac{2W}{\lambda}$, ensuring that the estimate remains reliable. For non-bandlimited kernels, such as Jakes correlation, the number of large eigenvalues remains approximately $\frac{2W}{\lambda}$, but the decay is more gradual, so additional eigenvalues contribute non-negligibly to the channel energy.

\subsubsection{Hypoexponential Distribution}
Truncating \eqref{eq:KLgamma} to $N$ terms, we can write $\gamma_1 \approx \tilde{\gamma}_1{=} \sum_{n=1}^NX_n$, where $ X_n\!\sim\mathrm{Exp}({1}/{\lambda_n})$. Thus, using \cite[Eq. (5.8)]{ross_introduction_1997}, the PDF of $\tilde{\gamma}_1$ is
\begin{equation}
    f_{\tilde{\gamma}_1}(x)=\sum_{n=1}^N\frac{\lambda_n^{N-2}\mathrm{e}^{-x/\lambda_n}}{\prod_{m\neq n}(\lambda_n-\lambda_m)}. \label{eq:PDFtrunc}
\end{equation}
Integrating \eqref{eq:PDFtrunc} gives the CDF of $\tilde{\gamma}_1$ as
\begin{equation}
    F_{\tilde{\gamma}_1}(x) = 1-\sum_{n=1}^N\frac{\lambda_n^{N-1}\mathrm{e}^{-x/\lambda_n}}{\prod_{m\neq n}(\lambda_n-\lambda_m)}. \label{eq:CDFtrunc}
\end{equation}
\subsubsection{Gamma Variable Correction} 
Here, we again truncate $\gamma_1$ to $N$ terms, but add a gamma variable as a further correction factor to improve accuracy. We approximate $\gamma_1$ by $\tilde{\gamma}_g=  \tilde{\gamma}_1 +Y$, where $Y\sim\mathrm{Gamma}(r,\theta)$ approximates the discarded terms, $\sum_{n=N+1}^{\infty}\!X_n$. We fit $Y$ by the method of moments, solving $ 
\mathbb{E}[\gamma_1]=\mathbb{E}[\tilde{\gamma}_1]+\mathbb{E}[Y]$ and $
\mathrm{Var}[\gamma_1]=\mathrm{Var}[\tilde{\gamma}_1]+\mathrm{Var}[Y]$ for the moments of $Y$, noting that the exact moments of $\gamma_1$  are given in \cite{erfan} and the moments of $\tilde{\gamma}_1$ can be found by truncating \eqref{eq:Egamma1} and \eqref{eq:Vargamma1}. We obtain $r$ and $\theta$ from $ {r}/{\theta}{=}\mathbb{E}[Y]$ and ${r}/{\theta^2}{=}\mathrm{Var}[Y]$. As $Y$ is a gamma variable,
\begin{equation}
    f_Y(y)=\frac{\theta^ry^{r-1}\mathrm{e}^{-\theta y}}{\Gamma(r)}.
\end{equation}
As $\tilde{\gamma}_1$ and $Y$ are independent, the PDF of $\tilde{\gamma}_g$ is therefore
\begin{align}
    f_{\tilde{\gamma}_g}(x) \!&\approx (f_{\tilde{\gamma}_1}\,*\,f_{Y})(x)=\int_0^xf_{\tilde{\gamma}_1}(x-y)f_Y(y)dy, \notag \\
    &\!\approx\!\sum_{n=1}^N\!\frac{\theta^r\lambda_n^{N-2}\mathrm{e}^{-x/\lambda_n}}{\Gamma(r)\prod\limits_{m\neq n}(\lambda_n\!-\!\lambda_m)}\int_0^xy^{r-1}\mathrm{e}^{-y(\theta-1/\lambda_n)}dy, \notag \\
    &\!\approx\!\sum_{n=1}^N\!\frac{\theta^r\lambda_n^{N-2}\mathrm{e}^{-x/\lambda_n}}{\Gamma(r)(\theta\!-\!1/\lambda_n\!)^r\!\!\!\prod\limits_{m\neq n}\!\!(\lambda_n\!\!-\!\lambda_m)\!}\gamma\Big(\!r,\!\Big(\!\theta\!-\!\tfrac{1}{\lambda_n}\!\!\Big)x\!\Big)\!.\! \label{eq:PDFgamma}
\end{align}
The approximate CDF of $\tilde{\gamma}_g$ can be expressed as a convolution involving \eqref{eq:CDFtrunc} and \eqref{eq:PDFgamma}, such that
\begin{align}
    F_{\tilde{\gamma}_g}\!(x)\!&\approx (F_{\tilde{\gamma}_1}\,*\,f_{\tilde{\gamma}_g})(x)=\int_0^xF_{\tilde{\gamma}_1}(x-y)f_{\tilde{\gamma}_g}(y)dy,\notag\\ &\!\approx\!\!\frac{\gamma(r,\theta x)}{\Gamma(r)}\!-\!\!\!\sum_{n=1}^N\!\!\frac{\theta^r\lambda_n^{N-1}\mathrm{e}^{-x/\lambda_n}}{\Gamma(r)\!(\theta\!-\!\!1\!/\!\lambda_n\!)^r\!\!\!\!\prod\limits_{m\neq n}\!\!\!(\!\lambda_n\!\!-\!\!\lambda_m\!)\!}\gamma\!\Big(\!r,\!\!\Big(\!\theta\!-\!\tfrac{1}{\lambda_n}\!\!\Big)x\!\Big)\!. \label{eq:CDFgamma}
\end{align}

\section{Numerical Results}\label{sec:numresults}
This section verifies the analytical results and explores the system behavior. In Figs. \ref{fig:CDFs}-\ref{fig:OP}, we consider the PDF, CDF and outage probability of ${\gamma}_1$, respectively, which in turn validates the KL expansion and the corresponding PDFs and CDFs. Where not otherwise specified, $W=1$ and $f=800$ MHz. For all results, the channel gain $\beta$ is selected so that the mean SNR when $W=1$ and $f=800$ MHz is 0 dB. We assume that $N=100$, which in turn corresponds to $100$ eigenvalues. This value was chosen to exceed the dominant mode estimate of $\frac{2W}{\lambda}$ discussed in Sec.~\ref{sec:PDFCDF} for all parameter sets considered, ensuring that the majority of the channel energy is captured. This large number causes many tail eigenvalues to have similar values, so the gamma-corrected PDF and CDF in \eqref{eq:PDFgamma} and \eqref{eq:CDFgamma} are used. For the Jakes correlation results, we assume $R=200$. For all simulations, $10^7$ replicates were generated.

\subsection{CDF Validation and Impact of Antenna Aperture}
Figure \ref{fig:CDFs} validates the accuracy of the KL expansion of $\gamma_1$ in Secs. \ref{sec:KLexp} and \ref{sec:eigs} and the accuracy of the approximate CDF in \eqref{eq:CDFgamma}, as well as comparing CAPAs with discrete antenna arrays of the same length. For a fair comparison, each discrete antenna is seen as a segment of the 1D line, and the corresponding discrete channel coefficient is the aggregate of the channel over the segment. We consider two discrete cases with 8 equally spaced elements within $W$, where the element apertures and the inter-element spacing are chosen such that the array captures 50\% and 80\% of the energy of the continuous aperture, respectively.

\begin{figure}[ht]
  \centering
  \subfigure[Sinc correlation]{
    \includegraphics[width=0.45\linewidth]{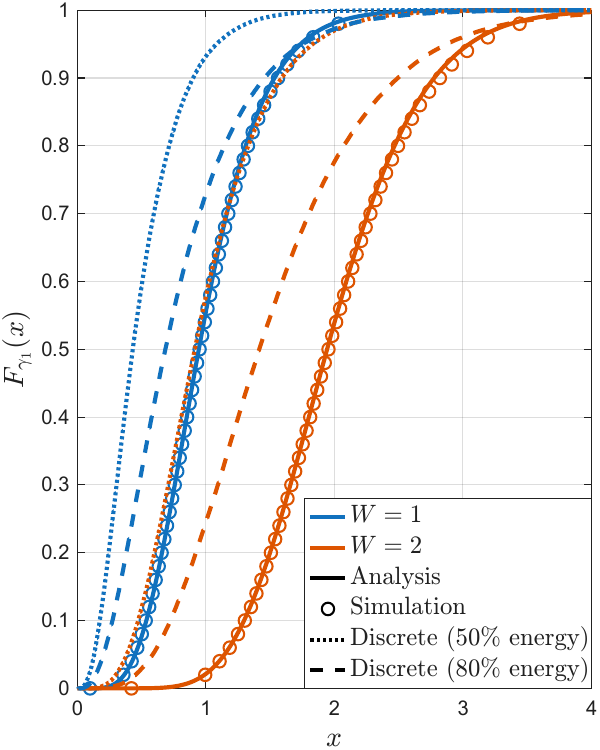}
    \label{fig:CDFsinc}
  }
  \hfill
  \subfigure[Jakes correlation]{
    \includegraphics[width=0.45\linewidth]{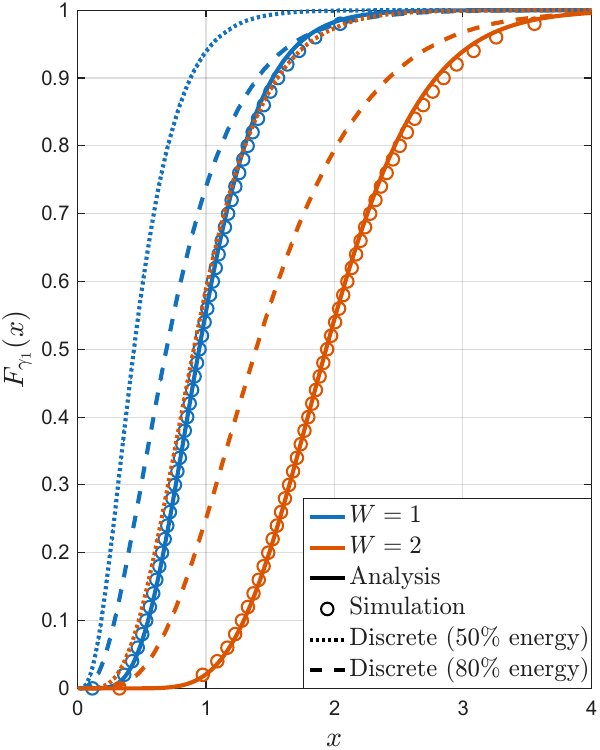}
    \label{fig:CDFray}
  }
  \caption{CDFs of CAPAs of varying lengths for both correlation models compared with discrete antenna arrays of the same length.}
  \label{fig:CDFs}
\end{figure}

The agreement between the analytical and simulated CAPA results is excellent, showing that the derived expressions accurately capture the SNR distribution of CAPAs. Additionally, it can be seen that CAPAs outperform discrete arrays of the same length. While discrete arrays have lower correlation due to antenna spacing, CAPAs exploit the full aperture to capture more energy, yielding superior performance and motivating their adoption in next-generation mobile systems.

Figure \ref{fig:CDFs} shows that the median SNR of CAPA systems grows nearly linearly with the aperture length $W$, reflecting the linear scaling of the mean SNR in \eqref{eq:Egamma1}. For $W=1$, the analytical mean and observed median are 1.00 and 0.95, respectively, while for $W=2$, they are 2.00 and 1.94. The close agreement of these values indicates that the SNR distributions are only weakly skewed. This approximate symmetry arises from the decay of spatial correlation and the averaging effect of integrating over many partially correlated contributions. Thus, the distribution becomes increasingly Gaussian-like for larger $W$, making the mean a good indicator of the median.

\subsection{PDF Validation and Impact of Carrier Frequency}
Figure \ref{fig:sinc_pdf} compares analytical and simulated CAPA system PDFs. Only results for sinc correlation are shown to avoid overcrowding the plot. Three aperture lengths and two carrier frequencies are considered.

\begin{figure}[ht]
    \centering
    \includegraphics[scale=0.56]{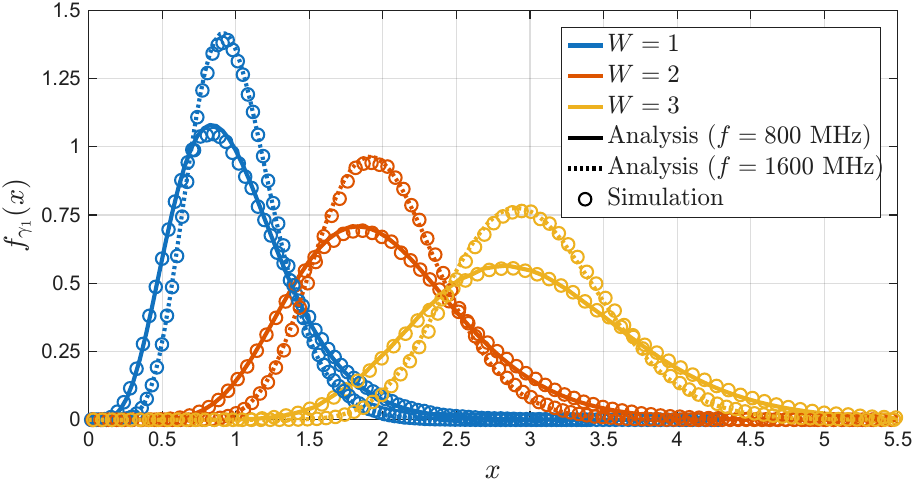}
    \vspace{-2em}
    \caption{The PDFs of CAPAs for correlated Rayleigh channels.}
    \label{fig:sinc_pdf}
\end{figure}

Once again, there is very good agreement between the analytical results and simulations across a range of antenna aperture lengths and carrier frequencies. As the aperture length $W$ increases, the SNR increases proportionally, resulting in a rightward shift of the PDFs. 

Increasing both $W$ and spatial correlation leads to wider distributions, indicating higher variance. This is caused by changes in the eigenvalue spectrum of the spatial correlation kernel directly affecting the variance of the SNR, as seen in \eqref{eq:Vargamma1}.
As $W$ increases, the number of significant eigenvalues increases, each positively contributing to the variance. 

The spatial correlation increases when the carrier frequency decreases. This increases the wavelength and results in a slower decay of correlation with distance. High spatial correlation leads to a concentration of power into a smaller number of dominant eigenvalues. This concentration results in a larger variance, as the sum of the squared eigenvalues increases.

Variance quantifies the absolute spread of the SNR distribution but ignores changes in the average signal strength. The coefficient of variation (CV), $\mathrm{CV}=\sqrt{\mathrm{Var}[\gamma_1]}/\mathbb{E}[\gamma_1]$, provides a normalized measure of variability by indicating the magnitude of fluctuations relative to the average SNR. CV values for each scenario in Fig. \ref{fig:sinc_pdf} are listed in Tab. \ref{tab:CV}.

\begin{table}[ht]
\centering
\caption{CV of the Scenarios in Fig. \ref{fig:sinc_pdf}}
\vspace{-1em}
\begin{tabular}{|c|c|c|c|}
\hline
    & $W=1$ & $W=2$ & $W=3$ \\ \hline
$f=800$ MHz &  0.40  & 0.29  & 0.24  \\ \hline
$f=1600$ MHz &  0.29 &  0.21 & 0.17  \\ \hline
\end{tabular}
\label{tab:CV}
\end{table}
\vspace{0.2em}

Although increasing both $W$ and spatial correlation leads to higher variance, they have opposite effects on the CV. As $W$ grows, the mean increases faster than the standard deviation due to the addition of more eigenmodes, causing the CV to decrease and indicating improved relative stability. In contrast, higher spatial correlation concentrates power into fewer modes, increasing the standard deviation more than the mean, which raises the CV and reflects greater relative fluctuations. Therefore, while both factors increase absolute variability, only correlation degrades relative reliability.

\subsection{Outage Probabilities}
Figure \ref{fig:OP} investigates the outage probability of ${\gamma}_1$ under sinc correlation. The probability of ${\gamma}_1$ being below a threshold $\gamma_{\mathrm{th}}$ is accurately approximated by
\begin{equation}
    P_{\text{out}} = P({\gamma}_1 \leq \gamma_{\text{th}}) \approx P(\tilde{\gamma}_g \leq \gamma_{\text{th}}) = F_{\tilde{\gamma}_g}(\gamma_{\text{th}}).
\end{equation}
Outage probability is an important metric to study in this work for multiple reasons. CAPAs mitigate outage probability by averaging small-scale multipath fading across their surface, thereby reducing deep fades that lead to performance degradation. 
The outage probability is also an important test of the analysis presented in this work, as it focuses on the region where analytical inaccuracies are most likely to manifest. To benchmark the accuracy of the derived distribution, we compare it with a gamma approximation, commonly used to approximate the distribution of the instantaneous SNR in fading channels. To emphasize the region where outage probability is most critical, we plot these results on a log scale. For these results, $f=400$ MHz and $W<1$, modeling scenarios with a high coefficient of variation. These parameters represent challenging environments where the average SNR is a poor predictor of reliability, and an accurate distribution is particularly important.

\begin{figure}[ht]
    \centering
    \includegraphics[scale=0.56]{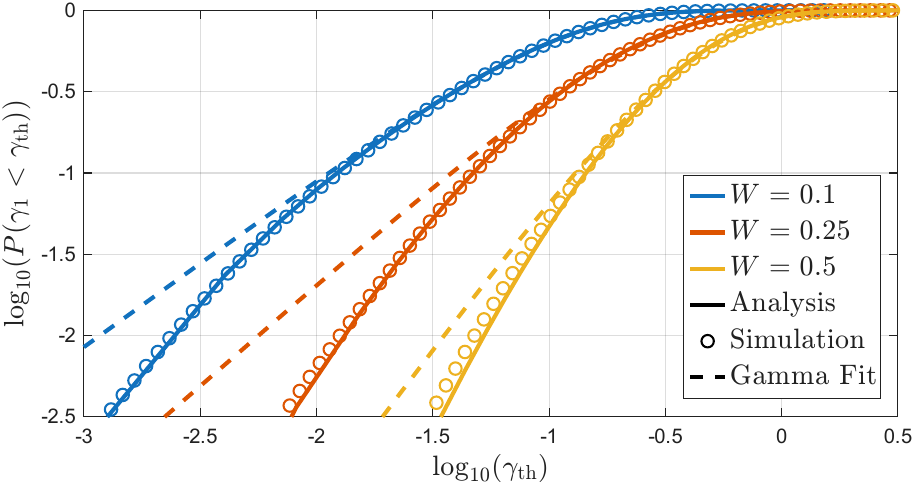}
    \vspace{-2em}
    \caption{The outage probabilities of CAPAs under sinc correlation.}
    \label{fig:OP}
\end{figure}

Figure \ref{fig:OP} shows that the distribution derived in this work accurately characterizes the outage probability. The analytical expression closely matches the simulated results across the entire range of thresholds, with excellent agreement even at very low outage probabilities of the order of $10^{-2.5}$. In contrast, the gamma approximation exhibits noticeable discrepancies in the high-reliability region, substantially overestimating the outage probability. This is particularly pronounced for smaller aperture sizes. This highlights the value of the proposed KL-based distributions. Additionally, the outage probability curves shift to the right as the aperture size increases, reflecting the combined effects of greater energy capture and spatial averaging. These effects increase the average  SNR and reduce the likelihood of deep fades.

%\begin{figure}[ht]
  %\centering
  %\subfigure[Rayleigh channels.]{
    %\includegraphics[width=0.45\linewidth]{sinc_OP.pdf}
    %\label{fig:OPsinc}
  %}
  %\hfill
  %\subfigure[Ray-based channels.]{
    %\includegraphics[width=0.45\linewidth]{ray_OP.pdf}
    %\label{fig:OPray}
 % }
  %\caption{Outage probabilities of CAPAs of varying lengths under both correlated Rayleigh and ray-based channels. FIX THIS REMOVE THE 10}
  %\label{fig:OPs}
%\end{figure}

\section{Conclusion}
This paper developed highly accurate analytical approximations for the matched-filter SNR distribution of 1D CAPAs in correlated Rayleigh environments under both sinc and Jakes correlation using a KL expansion. The analysis demonstrated excellent agreement with simulations and outperformed a discrete antenna array. Compared to a standard gamma approximation, our approach provided improved modeling of outage behavior, which is critical for reliability analysis. These results offer a practical framework for evaluating the performance of CAPAs and support their use as a benchmark for ultra-dense antenna arrays.

\section*{Appendix A \\ Derivation of Sinc Kernel Eigenvalues}
Rewriting the sinc function, we get
\begin{align}
    \lambda_n\!\!\approx\!\tilde{\lambda}_n\!\!&=\!\frac{\epsilon_n}{W}\!\!\!\int_0^W\!\!\!\!\!\int_0^W\!\!\!\frac{\sin(c(x_1{-}x_2)\!)\!}{c(x_1-x_2)}\!\cos\!\left(a_nx_1\!\right)\!\cos\!\left(a_nx_2\!\right)\!dx_{\!1}dx_2, \notag \\ 
    &=\frac{\epsilon_n}{W}I_n. \notag
\end{align}
A change of variable is required to avoid the difference on the denominator, resulting in a division by 0. Therefore, let $t=x_1-x_2\in[-W,W]$. The change in the integration limits partitions the original integral into two separate integrals, together covering the full integration range:
\begin{multline}
    I_n= \int_{-W}^0\!\int_{-t}^W\frac{\sin(ct)}{ct}\cos(a_n(t\!+\!x_2))\cos(a_nx_2)dx_2dt \\ + \int_{0}^W\!\!\!\!\int^{W\!-t}_0\!\frac{\sin(ct)}{ct}\cos(a_n(t\!+\!x_2))\cos(a_nx_2)dx_2dt.\!\!
\end{multline}
We consider three scenarios: the general case $a_n\neq c \neq 0$, and the special cases $a_n=c$ and $a_n=0$. Since all cases follow the same derivation approach, we present the detailed steps for the general case only. The special cases can be obtained analogously and are omitted here for brevity. 

Using a sum-to-product identity and integrating over $x_2$,
\begin{align}
    I_n&{=}\frac{1}{4a_nc}\bigg[\int_{-W}^0\!\!\frac{\sin(ct)}{t}\!\cos(a_nt)(\sin(A_2){+}A_2{+}\sin(2a_nt)  \notag \\ & \quad {+}2a_nt)dt {+}\!\!\int_{-W}^0\!\!\!\frac{\sin(ct)}{t}\sin(a_nt)\!\left(\cos(A_2){-}\cos(2a_nt)\!\right)\!dt \notag \\
    &\quad+ \int_0^W\frac{\sin(ct)}{t}\cos(a_nt)(\sin(A_2-2a_nt){+}A_2{-}2a_nt)dt \notag \\ &\quad+\int_0^W\frac{\sin(ct)}{t}\sin(a_nt)\left(\cos(A_2-2a_nt)-1\right)dt\bigg] \notag, \\
    &=\!\frac{1}{4a_nc}[I_{n1}+I_{n2}+I_{n3}+I_{n4}], \label{eq:Isum}
\end{align}
where $A_2=2a_nW$. Expanding $I_{n1}$,
\begin{multline}
    I_{n1} \!=\!2a_n\!\!\int_{\!-\!W}^0\!\sin(ct)\!\cos(a_nt)dt+\!\!\int_{\!-\!W}^0\!\!\frac{\cos(a_nt)\!\sin(2a_nt)}{t} \\ \times \sin(ct)dt+(\sin(A_2)+A_2)\int_{-W}^0\frac{\sin(ct)\cos(a_nt)}{t}dt. \notag
\end{multline}
Applying product-to-sum identities to separate the trigonometric integrals, and knowing that the sine snd cosine integrals are defined as $\Si(x)=\int_0^x\frac{\sin(t)}{t}$ and $\Cin(x)=\int_0^x\frac{\cos(t)-1}{t}$,
\begin{align}
&I_{n1} {=} a_n\!\!\left[\!\tfrac{\cos(C_p){-}1}{c+a_n}{+}\tfrac{\cos(C_m){-}1}{c-a_n}\!\right]\!\!{+}\tfrac{1}{4}\!(\Cin(C_{3p}\!){-}\Cin(C_{3m}\!)  \notag \\ &\,\,\,{+}\Cin(C_p){-}\Cin(C_m)){+}\tfrac{1}{2}\!(\sin(A_2){+}A_2)(\Si(C_p){+}\Si(C_m)\!),\! \label{eq:I1}
\end{align}
where $C_p=W(c+a_n)$, $C_p=W(c-a_n)$, $C_{3p}=W(c+3a_n)$ and $C_{3m}=W(c-3a_n)$. Following the same procedure,
\begin{align}
    I_{n2} &= \tfrac{1}{2}\cos(A_2)(\Cin(C_p){-}\Cin(C_m))+\tfrac{1}{4}(\Cin(C_p) \notag \\ &-\Cin(C_m)-\Cin(C_{3p})+\Cin(C_{3m})).\label{eq:I2} \\
    I_{n3} &= a_n\!\left[\!\tfrac{\cos(C_p){-}1}{c+a_n}{+}\tfrac{\cos(C_m){-}1}{c-a_n}\!\right]\!{+}\tfrac{1}{2}A_2(\Si(C_p){+}\Si(C_m)\!) \notag \\ &-\! \tfrac{1}{4}\!\cos(A_2)\!(\Cin(C_{3m}){-}\!\Cin(C_{3p}){+}\!\Cin(C_m){-}\!\Cin(C_p)\!) \notag \\ & +\!\tfrac{1}{4}\!\sin(A_2)\!(\Si(C_{3p}){+}\Si(C_{3m}){+}\Si(C_p){+}\Si(C_m)\!).\!\label{eq:I3} \\
    I_{n4} &= \tfrac{1}{2}(\Cin(C_p){-}\Cin(C_m)){+}\tfrac{1}{4}\cos(A_2)(\Cin(C_{3m})\notag \\&{-}\!\Cin(C_{3p}){-}\Cin(C_m){+}\Cin(C_p)){+}\tfrac{1}{4}\!\sin(A_2) \notag \\ &{\times} (\Si(C_p)+\Si(C_m)-\Si(C_{3p})-\Si(C_{3m})).\label{eq:I4}
\end{align}
Substituting \eqref{eq:I1}-\eqref{eq:I4} into \eqref{eq:Isum} and simplifying gives the general case in \eqref{eq:lambda_sinc}. Following the same process with $a_n=0$ and $a_n=c$ respectively gives the two special cases in \eqref{eq:lambda_sinc}.

\section*{Appendix B \\ Derivation of Jakes Kernel Eigenvalues }
Using the identity $\cos(a_nx)=(\mathrm{e}^{ja_nx}+\mathrm{e}^{-ja_nx})/2$, \eqref{eq:lambdan_ray} becomes
\begin{multline}
    \lambda_n\!\approx\!\tilde{\lambda}_n\!=\!\frac{\epsilon_n\beta}{4WR}\!\sum_{r=1}^R\!\int_0^W\!\!\!\!\left(\mathrm{e}^{j(\alpha_r+a_n)x_1}{+}\mathrm{e}^{j(\alpha_r-a_n)x_1}\right)dx_1 \\ \times\int_0^W\!\!\!\!\left(\mathrm{e}^{-j(\alpha_r-a_n)x_2}{+}\mathrm{e}^{-j(\alpha_r+a_n)x_2}\right)dx_2. 
\end{multline}
As the two integrals are complex conjugates with identical limits, their product can be expressed as the squared magnitude of a single integral, such that
\begin{equation}
    \tilde{\lambda}_n\!\!=\!\frac{\epsilon_n\beta}{4WR}\!\!\sum_{r=1}^R\!\left|\int_0^W\!\!\!\!\!\left(\!\mathrm{e}^{j(\alpha_r+a_n)x+j(\alpha_r-a_n)x}\!\!\right)\!\!dx\right|^2\!\!\!\!=\!\frac{\epsilon_n\beta}{4WR}\!\!\sum_{r=1}^R\!|I_r|^2\!\!. \label{eq:rayint}
\end{equation}
The integral $I_r$ is now evaluated for four configurations of parameters. Case 1 involves $a_n=0$, $|\alpha_r|=a_n$, such that
\begin{equation}
    I_{r,c1} = \int_0^W\!\!2\,dx = 2W. \label{eq:rayC1}
\end{equation}
Case 2 considers $a_n=0$, $|\alpha_r|\neq a_n$, giving 
\begin{equation}
    I_{r,c2} = \int_0^W\!\!\!2\mathrm{e}^{j\alpha_rx}dx=\left[\frac{2\mathrm{e}^{j\alpha_rx}}{j\alpha_r}\right]^W_0\!\!\!\!=2\left(\!\frac{\mathrm{e}^{j\alpha_rW}-1}{j\alpha_r}\!\right)\!. \label{eq:rayC2}
\end{equation}
Case 3 involves $a_n\neq0$, $|\alpha_r|=a_n$, and as $\alpha_r$ is an integer multiple of $\frac{\pi}{W}$, we obtain
\begin{equation}
    I_{r,c3}=\int_0^W\!\!\!\!\left(\mathrm{e}^{2j\alpha_rx}+1\right)dx=\left[\frac{\mathrm{e}^{2j\alpha_rx}}{2j\alpha_r}{+}x\right]^W_0=W. \label{eq:rayC3}
\end{equation}
Finally, case 4 considers $a_n\neq0$, $|\alpha_r|\neq a_n$, leading to
\begin{align}
    I_{r,c4} &= \int_0^W\!\!\!\!\mathrm{e}^{j(\alpha_r+a_n)x}dx+\int_0^W\!\!\!\!\mathrm{e}^{j(\alpha_r-a_n)x}dx, \notag \\
    & = \tfrac{\mathrm{e}^{(j(\alpha_r+a_n)W)}-1}{j(\alpha_r+a_n)} +\tfrac{\mathrm{e}^{(j(\alpha_r-a_n)W)}-1}{j(\alpha_r-a_n)}. \label{eq:rayC4}
\end{align}

\noindent Substituting \eqref{eq:rayC1}-\eqref{eq:rayC4} into \eqref{eq:rayint} gives \eqref{eq:lambda_ray}.

\bibliographystyle{IEEEtran}
\bibliography{IEEEabrv, referencesCTS.bib}

\end{document}